\documentclass[11pt]{article}

\usepackage{latexsym,amsfonts,amssymb,float}

\usepackage{epic,eepic,epsf}
\usepackage{subfigure}
\usepackage{wrapfig}
\usepackage{verbatim}
\usepackage{algorithmic}
\usepackage{algorithm}

\usepackage{stmaryrd}

\usepackage{fullpage}

\newtheorem{theorem}{Theorem}
\newtheorem{lemma}{Lemma}

\newenvironment{proof}{\noindent\textbf{Proof:}}{\hfill$\Box$}

\floatstyle{ruled}
\newfloat{algorithm}{Algorithm}{tbhp}[section]

\begin{document}

\renewcommand{\thefootnote}{\fnsymbol{footnote}}

\title{Stabilizing Data-Link over non-FIFO Channels\\ with Optimal Fault-Resilience}

\author{
Shlomi Dolev\footnotemark[1] \and 
Swan Dubois\footnotemark[2] \and 
Maria Potop-Butucaru\footnotemark[2] \and 
S\'ebastien Tixeuil\footnotemark[3]}
\date{}

\maketitle

\footnotetext[1]{Department of Computer Science, Ben-Gurion University of the Negev, Beer-Sheva, 84105, Israel. Email: {\tt dolev@cs.bgu.ac.il}. The work started while this author was a visiting professor at LIP6.
Research supported in part by the ICT Programme of the European Union under contract
number FP7-215270 (FRONTS), Deutsche Telekom, US Air-Force
and Rita Altura Trust Chair in Computer Sciences.}

\footnotetext[2]{UPMC Sorbonne Universit\'es \& INRIA, France.}

\footnotetext[3]{UPMC Sorbonne Universit\'es \& IUF, France. This work is supported in part by ANR projects SHAMAN, ALADDIN, and SPADES}

\begin{abstract}
Self-stabilizing systems have the ability to converge to a correct behavior when started in any configuration. Most of the work done so far in the self-stabilization area assumed either communication via shared memory or via FIFO channels. 

This paper is the first to lay the bases for the design of self-stabilizing message passing algorithms over unreliable non-FIFO channels. We propose an optimal stabilizing data-link layer that emulates a reliable FIFO communication channel over unreliable capacity bounded non-FIFO channels.   
\end{abstract}

\maketitle

\renewcommand{\thefootnote}{\arabic{footnote}}

\section{Introduction}

Self-stabilization~\cite{D74j,D00b,T09bc} is one of the most versatile techniques to sustain availability, reliability, and serviceability in modern distributed systems. After the occurrence of a catastrophic failure that placed the system components in some arbitrary global state, self-stabilization guarantees recovery to a correct behavior in finite time without external (\emph{i.e.} human) intervention. 

As self-stabilization is usually considered a hard property to satisfy, most related works used a simple communication model where processes can determine the current state of every neighbors (and update their own state accordingly) in an atomic manner (this model is referred to in the literature as the \emph{state model} or systems with \emph{central/distributed daemon}). Asynchronous message passing is a more realistic way, compared to the state model, for the communication of processes in distributed systems. In such settings processes communicate by exchanging messages, where sending and receiving message are two separate atomic actions. Transformers for shared memory protocols to act in message passing systems, assuming the existence of FIFO channels, have been suggested, see \emph{e.g.} \cite{DIM93j,D00b}. At the core of those transformers are the \emph{data-link} protocols, that permit to reliably exchange information between neighboring processes in the message passing model. In addition, several self-stabilizing protocols (\emph{i.e.} \cite{DT06c,AADDPT10c}) that are directly written in the message-passing model use an underlying data-link protocol as a building block.

\paragraph{Related Works.} The most studied data-link protocol, namely the alternating bit protocol (ABP), was proved to satisfy some stabilization properties~\cite{AB93j,DIM97j,BGM93j}: in any execution of ABP, there exists a suffix that satisfies the specification (\emph{i.e.} the ABP is \emph{pseudo-stabilizing}). However, the impossibility to bound the amount of time before this suffix is reached makes the ABP unsuitable for most tasks. In~\cite{GM91j,DIM93j}, Gouda and Multari and Dolev, Israeli, and Moran independently prove that for a wide class of problems (including data-link construction) guaranteeing self-stabilization when channels have unbounded initial capacity requires some kind of unboundedness in the protocol (either unbounded memory in~\cite{GM91j}, the existence of some aperiodic function~\cite{AB93j}, or access to a probabilistic variable~\cite{AB93j}). In other words, those approaches require to implement unbounded capacities with finite memory, and are thus unlikely to be actually used in real systems. Also, the expected time before reaching a stable global state depends on the initial contents of communication channels, and is thus unbounded.

Most recent works took the more realistic approach of assuming channels with bounded initial capacity. The token passing protocol in~\cite{DIM97j} can be used as a self-stabilizing ABP on bounded channels and only uses bounded memory. Howell \emph{et al.} \cite{HNM99c} provide another data-link protocol over bounded channels with the additional desirable property that the underlying communication channels are unreliable (\emph{i.e.} they may loose or duplicate messages). Later, Varghese~\cite{V00j} presented self-stabilizing solutions for a wide class of problems (including data-link) in the same setting using only bounded memory. The FIFO ordering is crucial for the stabilization since solution relies on the fact that a sequence number that is unique in the system is eventually generated and flushes every stale message in transit. A common drawback of all aforementioned self-stabilizing data-link solutions is that they assume a FIFO order on messages in the underlying communication channels. 

A notable exception are the protocols provided in \cite{BK97j} that assumed a non-FIFO message passing system. The main difference with our approach stands in the fact that their system is enhanced with some failure detector whereas we assume a fully asynchronous system.

Another drawback of previously mentioned self-stabilizing data-link solutions is that they do not consider the quantitative impact of faults from the perspective of the upper layer protocol (\emph{i.e.} the layer that actually uses the data-link). Indeed, starting from an arbitrary global state where channels may initially contain messages 
of arbitrary content, being able to bound the number of messages sent that are lost or duplicated, or the number of fake messages that are actually delivered to the destination is a very important matter. The bound on the number of faulty messages delivered by a data-link protocol is an important criteria for the data-link usability in larger application, in order to ensure the fault-resiliency of the global protocol stack. To our knowledge, only~\cite{DT06c,DDNT10j} addresses, to some extent, this concern. A snap-stabilizing data-link (and global reset) for bounded capacity FIFO channels appears in \cite{DT06c}. In \cite{DDNT10j} a snap-stabilizing solution to the propagation of information with feedback (PIF) problem is presented. The solution can be seen as a data-link protocol when reduced to a 2-processes system. Snap-stabilization implies that any message that is actually sent by the sender process is eventually received by the receiver process, so the number of lost messages is $0$. However, we cannot provide bounds on the number of duplications of a given message or on the number of ghost messages (that is, messages that are not sent by the sender  but received by the receiver due to the arbitrary content of communication channels in the initial configuration). Concerning the self-stabilizing protocols, only an order of magnitude on those numbers can be inferred from the stabilization time (if $m$ messages sent or received are required to enter a legitimate global state from any arbitrary initialization, then at most $m$ messages could be lost, duplicated, or wrongly delivered). To our knowledge, the question of fault-resilience optimality for data-link protocols has never been raised before, although it has important practical consequences. 

\paragraph{Our contribution.} Our contribution in this paper is twofold:
\begin{enumerate}
\item We define complexity metrics that are related to the fault-resilience of data-link protocols, and present impossibility results in the context of self-stabilization (\emph{i.e.} the ability to recover from any arbitrary initial global state). In particular, we prove that no data-link protocol can prevent one message duplication, the delivery of a single fake message, or the reordering of a single message.
\item We present a data-link protocol that is optimal with respect to all presented fault-resilience metrics. Moreover, unlike previous self-stabilizing solutions that operate assuming the underlying communication channels preserve FIFO ordering, the channels we consider may indeed reorder messages, having some of them remain in the channel for an arbitrary long time. The strong fault-resilience property exhibited by our protocol makes it particularly suitable for inclusion as a building block in more complex applications.
\end{enumerate}

\paragraph{Paper organization.} The paper is organized as follows. Section \ref{sec:model} proposes the network model and hypothesis and then, the data-link problem specification. Section \ref{sec:impossibility} introduces three lower bounds results that justify our optimality claim. In Section \ref{sec:solution}, we propose our optimal stabilizing data-link protocol altogether with its correctness proof.

\section{Model}\label{sec:model}

\subsection{System Model}

A \emph{message-passing distributed system} consists of $n$ {\em processes}, $p_0,p_1,p_2,\ldots,p_{n-1}$, connected by {\em communication links} through which messages are sent and received. Two processes connected through a communication link are referred in the following as neighboring processes.

As emphasized in \cite{AB93j} the purpose of a data-link protocol is to reliably transmit messages from one end of a communication medium (link) to the other end. Ideally, messages have to arrive without duplication or loss and in the order they have been sent. Therefore, we focus in the following on the communication between two neighboring processes $p_i$ and $p_j$ where $p_i$ acts as the sender and $p_j$ acts as the receiver. The communication link between the two processes $p_i$ and $p_j$ is denoted in the following $(p_i,p_j)$ and is composed of two virtual directed channels $(i,j)$ and $(j,i)$. The channel $(i,j)$ is used to send messages from $p_i$ to $p_j$ while the channel $(j,i)$ is used to send acknowledgments from $p_j$ to $p_i$. In systems where $p_j$ is also message sender, two additional virtual channels are used to carry the messages from $p_j$ to $p_i$ and acknowledgments from $p_i$ to $p_j$.

We assume in the following that the capacity of each directed channel is $c$ packets (\emph{i.e.} low level messages). Note that in the scope of self-stabilization, where the system copes with an arbitrary starting configuration, there is no deterministic data-link simulation that uses bounded memory when the capacity of channels is unbounded~\cite{GM91j,DIM97j}.

The channels are non-FIFO and not necessarily reliable (\emph{i.e. } packets may not follow the FIFO order and may be lost). Additionally, their delivery time is unbounded. That is, any non lost packet is received in a finite but unbounded time. Each channel $(i,j)$ is \emph{weakly fair} in the sense that if the sender sends infinitely often a packet on the channel, then the receiver receives this packet an infinite number of time. Sending a packet to a channel whose capacity is exhausted (\emph{i.e.} the channel already contains $c$ packets) results in loosing a packet (either a packet already in the channel or the packet being sent). 

As we deal with arbitrary initial corruption, a channel may initially contain up to $c$ ghost packets (\emph{i.e.} packets that have never been sent and contain arbitrary content).

A processor is modeled by a state machine that executes \emph{steps}. Channels are modeled as sets (rather than queues to reflect the non-FIFO order). For example, the $c$-bounded channel $(i,j)$ (used to send messages from $p_i$ to $p_j$) is modeled by a $c$-sized set denoted by $s_{ij}$.

In each step, a processor changes its local state (\emph{i.e.} the state of its local memory), and executes a single communication operation, which is either a {\em send}  operation or a {\em receive} operation. The communication operation changes the state of an attached channel. In case the communication operation is a send operation from $p_i$ to $p_j$ then $s_{ij}$ is a union of $s_{ij}$ in the previous state with the sent packet. If the obtained union does not respect the bound $|s_{ij}|\leq c$ then an arbitrary message in the obtained union is deleted. In case the communication operation is a receive operation of a (non null) packet $m$ ($m$ must exist in $s_{ji}$ of the previous state), then $m$ is removed from $s_{ji}$. A receive operation by $p_i$ from $p_j$ may result in a null packet even when the $s_{ji}$ is not empty, thus allowing unbounded delay for any particular packet. Packet losses are modeled by allowing spontaneous packet removals from the set.
 
A \emph{configuration} of the system is the product of the local states of processes in the system and of their incident channels.

An {\em execution} is a sequence of configurations, $E=(C_1,C_2,\ldots)$ such that $C_i$, $i>1$, is obtained from $C_{i-1}$ when at least one process in the system executes a step.

\subsection{Problem Specification}\label{sub:specification}

The specification we provide in this section is borrowed from \cite{L96b} but we adapt it to the self-stabilizing context. In particular, we introduce the idea to bound the number of lost, duplicated, ghost and re-ordered messages by some constants.

Consider a system of two processors $p_i$ and $p_j$. A distributed application needs to send some messages from $p_i$ to $p_j$. We say that the application layer of $p_i$ \emph{sends} a message when it requests the communication protocol to carry this message to $p_j$. This message is \emph{delivered} to $p_j$ when the communication protocol releases this message to the application layer of $p_i$. A \emph{ghost} message is a message delivered to $p_j$ whereas $p_i$ did not send it previously (due to the arbitrary content of communication channels in the initial configuration). A \emph{duplicated} message is a message that is delivered several times to $p_j$ whereas $p_i$ sent it only once. A message is \emph{lost} when $p_i$ sends it but $p_j$ never delivers it. A message $m$ is \emph{reordered} when it is delivered to $p_j$ before a message $m'$ whereas $m$ has been sent after $m'$ by $p_i$. Intuitively, the goal of a Stabilizing Data-Link protocol is to provide a communication black box that ensures some properties on the number of lost, duplicated, ghost and reordered messages starting from any arbitrary configuration. In the sequel, we formally specify the Stabilizing Data-Link problem

We associate to any execution $E$ the sequence $S(E)=m_0m_1m_2\ldots$ of messages sent by $p_i$ in $E$ and the sequence $R(E)=m'_0m'_1m'_2\ldots$ of messages delivered to $p_j$ in $E$. Note that we consider that all sent messages are different (even if their actual content are identical, we can distinguish them as external observer of the system). We introduce the following notations. For any sequence $W$ and any integers $i$ and $j$, $W^j$ is the prefix of $W$ of length $j$ and $W_i$ is the suffix of $W$ such that $W=W^{i-1}W_i$. The notation $\epsilon$ denotes the empty sequence. For example, $R(E)^0=\epsilon$. For any message $m$, we define the $m^*$ as the repetition of $m$ an arbitrary number of times (possibly 0). For any sequence $W$, the sequence $W^*$ is the result of the application of the $^*$ operator to each message of $W$. 

For any non negative integers $\alpha$, $\beta$, $\gamma$, and $\delta$, the \textbf{($\alpha,\beta,\gamma,\delta$)-Stabilizing Data-Link communication} over $c$-bounded channels satisfies the following properties starting from an arbitrary configuration (with $p_i$ and $p_j$ being respectively the sender and the receiver) for any execution $E$:
\begin{itemize}
\item \textbf{$\alpha$-Loss:} The first $\alpha$ messages sent by $p_i$ (in the worst case) may be lost.
\[\exists a\leq\alpha, \forall m\in S(E)_a, m\in R(E)\]
\item \textbf{$\beta$-Duplication:} The first $\beta$ messages delivered to $p_j$ (in the worst case) may be duplicated ones.
\[\exists b\leq\beta, \forall m\in S(E), \big|\{m'_i=m|m'_i\in R(E)\}\big|> 1\Rightarrow m\in R(E)^b\]
\item \textbf{$\gamma$-Creation:} The first $\gamma$ messages delivered to $p_j$ (in the worst case) may be ghost messages.
\[\exists c\leq\gamma, \forall m\in R(E), m\notin S(E)\Rightarrow m\in R(E)^c\]
\item \textbf{$\delta$-Reordering:} The first $\delta$ messages delivered to $p_j$ (in the worst case) may be reordered.
\[\exists d\leq \delta, R(E)_d=S(E)^*\]
\end{itemize}

In the following section, we show that it is impossible to perform a ($\alpha,\beta,\gamma,\delta$)-Stabilizing Data-Link communication with $\beta=0$, $\gamma=0$, or $\delta=0$. Then, we can deduce that a  ($0,1,1,1$)-Stabilizing Data-Link communication achieves optimal fault-resiliency. The above definitions imply that such a communication protocol ensures that $R(E)=S(E)$ or $R(E)=m.S(E)$ (where $m$ is an arbitrary message, it may be present in $S(E)$) for any execution $E$. In other words, the sequence of received messages by $p_j$ is identical to the sequence of emitted messages by $p_i$ excepted the first delivery in the worst case.

\section{Lower Bounds}\label{sec:impossibility}

In this section, we propose three impossibility results related to the possible values for the parameters $\beta$, $\gamma$, and $\delta$. We prove that the lower bounds for $\beta$, $\gamma$, and $\delta$ parameters is $1$. These results confirm the claim that the protocol we propose is optimal since it implements a $(0,1,1,1)$-Stabilizing data-link.

\begin{theorem}\label{th:impGamma}
There exists no $(\alpha,\beta,\gamma,\delta)$-Stabilizing Data-Link communication algorithm over $c$-bounded channels with $\gamma=0$.
\end{theorem}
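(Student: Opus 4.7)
My plan is to proceed by contradiction, exploiting the fact that the arbitrary initial corruption allowed on channels can be tailored to exactly simulate the tail of a legitimate execution. The idea is to copy the ``just-before-delivery'' local state and channel content of a good run into a fresh initial configuration in which the sender has nothing to send, thereby forcing the receiver to release a ghost message on its very first active step.

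Concretely, suppose for contradiction that some algorithm $\mathcal{A}$ achieves $(\alpha,\beta,0,\delta)$-Stabilizing Data-Link over $c$-bounded channels. I would first fix a reference execution $E_1$ starting from an uncorrupted configuration in which $p_i$'s application sends sufficiently many messages to force, by the $\alpha$-Loss property, the eventual genuine delivery of at least one message $m$. Let $C$ be the configuration in $E_1$ immediately preceding the single receive step of $p_j$ that releases $m$ to the application layer, and let $q_j$ denote $p_j$'s local state and $\sigma$ the content of $s_{ij}$ in $C$.

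Next I would construct an alternative initial configuration $C'$ identical to $C$ on the ``receiver side'' (local state $q_j$ for $p_j$ and content $\sigma$ for $s_{ij}$) but with $p_i$ in an arbitrary fresh local state and $s_{ji}$ holding arbitrary content. Since the model permits arbitrary local states and up to $c$ ghost packets per channel, $C'$ is a legal starting configuration. I would then exhibit the execution $E_2$ from $C'$ in which $p_i$'s application never requests a send and the scheduler activates $p_j$ first, replaying exactly its receive step from $C$. Because that step depends only on $q_j$ and on an available packet of $s_{ij}$, both of which agree with $C$, the step releases $m$ to $p_j$'s application.

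Since $S(E_2)=\epsilon$, we have $m\in R(E_2)$ while $m\notin S(E_2)$; the $0$-Creation clause then demands $m\in R(E_2)^0=\epsilon$, a contradiction. The main delicate point is justifying the existence of the reference delivery of $m$ in $E_1$: one must invoke the $\alpha$-Loss clause on a long enough send sequence to guarantee that at least an $(\alpha{+}1)$-th sent message is eventually delivered, and then isolate the single receive step that hands that message to the application. Once that step is pinpointed, the indistinguishability between $C$ and $C'$ from $p_j$'s local viewpoint does the rest of the work.
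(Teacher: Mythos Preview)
Your argument is correct, but it is considerably more elaborate than the paper's. The paper simply observes that, since the initial configuration is arbitrary (including corruption of the program counter) and channels may hold up to $c$ ghost packets, one may start $p_j$ directly at the ``deliver'' instruction with a ghost message ready; one then lets $p_i$ send messages different from that ghost to obtain $m\in R(E)\setminus S(E)$ with $m\notin R(E)^0=\epsilon$.

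You instead manufacture the offending initial state by copying $p_j$'s local state and the content of $s_{ij}$ from a legitimate execution $E_1$ just before a genuine delivery, and then replay $p_j$'s step in a fresh run $E_2$ where $p_i$ never sends. This is a standard indistinguishability/state-copying technique and has the virtue of not appealing to ``program counter corruption'' at all: you certify that the bad state is one the algorithm can actually reach, hence unquestionably a valid initial configuration regardless of how liberally one interprets ``arbitrary state.'' The price is the extra machinery (invoking $\alpha$-Loss to guarantee a delivery in $E_1$, isolating the exact step, arguing the step depends only on $q_j$ and the available packet). The paper's route is shorter; yours is more robust across models.
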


\begin{proof}
By contradiction, let $\mathcal{A}$ be any $(\alpha,\beta,0,\delta)$-Stabilizing Data-Link communication algorithm over $c$-bounded channels must have an instruction that delivers messages to the receiver processor. As the program counter may be corrupted and channels may contain up to $c$ ghost messages in the initial configuration, the receiver processor may execute this instruction during the first step of an execution $E$. In consequence, the first message of $R(E)$ may be a ghost message $m$. Hence, we can assume that $R(E)^1=m$.

It is possible to construct the execution $E$ such that $m\notin S(E)$. In conclusion, we have: $\exists m\in R(E), m\notin S(E)\wedge m\notin R(E)^0=\epsilon$ (recall that $\epsilon$ denotes the empty sequence). This is contradictory with the 0-Creation property of $\mathcal{A}$ and implies that $\gamma\geq 1$ for any $(\alpha,\beta,\gamma,\delta)$-Stabilizing Data-Link communication algorithm over $c$-bounded channels.
\end{proof}

\begin{theorem}\label{th:impBeta}
There exists no $(\alpha,\beta,\gamma,\delta)$-Stabilizing Data-Link communication algorithm over $c$-bounded channels with $\beta=0$.
\end{theorem}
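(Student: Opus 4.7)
The plan is to adapt the strategy of Theorem~\ref{th:impGamma} so that the ghost packet introduced by the initial configuration collides, at the level of externally observable content, with a message that is subsequently sent by $p_i$; the two resulting deliveries will then be counted as a duplication rather than as a ghost creation.

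As a first step, I would appeal to the fact that $\mathcal{A}$ must contain at least one instruction at $p_j$ whose effect is to release a message to the application layer (otherwise nothing sent by $p_i$ could ever be delivered). Because the program counter of $p_j$ and the contents of the channel $s_{ij}$ may be arbitrarily corrupted in the initial configuration, I can choose an initial configuration in which $p_j$'s next action is that delivery instruction and $s_{ij}$ contains a ghost packet that, when processed, makes $p_j$ release a message of any content $c$ that I wish.

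The execution $E$ would then proceed in two phases. In the first step, $p_j$ fires the delivery instruction and places a message $m_0$ of content $c$ at the head of $R(E)$. Immediately afterwards, $p_i$'s application layer issues a single send request for a message $m$ of content $c$, so that $S(E)=m$, and $\mathcal{A}$ is then scheduled in a standard manner (using weak fairness of the channels) until $p_j$ delivers $m$ through its normal protocol mechanism once. Since each sent message is identified externally by its content and $m$ is the only element of $S(E)$, both $m_0$ and this later delivery belong to $\{m'_i = m \mid m'_i \in R(E)\}$, whose cardinality is therefore at least $2$. The $0$-Duplication requirement would then force $m \in R(E)^0 = \epsilon$, which is impossible; this contradicts $\beta = 0$.

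The main delicate point I expect is to justify that $m_0$ must be counted as a duplicate of $m$ rather than as a ghost creation. This hinges on the fact that the specification identifies delivered messages with sent ones purely via their content, with no temporal constraint on the relative order of the send of $m$ and the corresponding delivery, and that $\mathcal{A}$ itself has no way of attaching an externally visible provenance tag to an individual delivered packet; consequently, the external observer has no means to break the tie between $m_0$ and the later delivery of $m$ other than by attributing both to $m$.
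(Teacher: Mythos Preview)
Your approach is essentially the paper's: both arguments force the initial ghost delivery (guaranteed by Theorem~\ref{th:impGamma}) to share its content with a message that $p_i$ subsequently sends, so that the two deliveries are counted as a duplication of one sent message rather than as a creation, contradicting $\beta=0$. The paper simply invokes Theorem~\ref{th:impGamma} to obtain the first ghost delivery, whereas you re-derive that step; and your closing paragraph about why the ghost must be attributed to $m$ is actually more explicit than the paper's treatment of the same point.

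There is one genuine gap, though it is easy to repair. You fix $S(E)=m$, a single sent message, and then assert that ``$\mathcal{A}$ is scheduled \ldots\ until $p_j$ delivers $m$ through its normal protocol mechanism once.'' But the hypothesis is only that $\mathcal{A}$ is an $(\alpha,0,\gamma,\delta)$-Stabilizing Data-Link for some \emph{arbitrary} $\alpha$: with $\alpha\geq 1$ and a single message in $S(E)$, the $\alpha$-Loss clause does not force any real delivery at all. A protocol could legitimately treat the ghost $m_0$ as ``the'' delivery of $m$ and never deliver anything further, yielding $R(E)=m_0$ and no duplication. The paper sidesteps this by writing ``the first (real) message sent by $p_i$ to $p_j$ \emph{and delivered to $p_j$} by $\mathcal{A}$ is the same message $m$,'' which implicitly lets $p_i$ issue enough sends (at least $\alpha+1$) so that $\alpha$-Loss guarantees some genuine delivery, and then matches the ghost's content to that delivered message. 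You can patch your construction the same way: have $p_i$ send at least $\alpha+1$ messages, all of content $c$, so that $\alpha$-Loss forces at least one real delivery of content $c$; together with the ghost this gives two deliveries attributable to a single element of $S(E)$, and the contradiction goes through.
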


\begin{proof}
By contradiction, let $\mathcal{A}$ be any $(\alpha,0,\gamma,\delta)$-Stabilizing Data-Link communication algorithm over $c$-bounded channels. Following Theorem \ref{th:impGamma}, we have $\gamma>0$. This implies that the first message delivered to $p_j$ in an execution $E$ by $\mathcal{A}$ may be a ghost message $m$. Hence, we can assume that $R(E)^1=m$.

It is possible to construct the execution $E$ such that the first (real) message sent by $p_i$ to $p_j$ and delivered to $p_j$ by $\mathcal{A}$ is the same message $m$. This message has been sent by $p_i$ only once but has been delivered to $p_j$ at least twice. In conclusion, we have: $\exists m\in S(E),\big|\{m'_i=m|m'_i\in R(E)\}\big|>1 \wedge m\notin R(E)^0=\epsilon$ (recall that $\epsilon$ denotes the empty sequence). This is contradictory with the 0-Duplication property of $\mathcal{A}$ and implies that $\beta\geq 1$ for any $(\alpha,\beta,\gamma,\delta)$-Stabilizing Data-Link communication algorithm over $c$-bounded channels.
\end{proof}

\begin{theorem}\label{th:impDelta}
There exists no $(\alpha,\beta,\gamma,\delta)$-Stabilizing Data-Link communication algorithm over $c$-bounded channels with $\delta=0$.
\end{theorem}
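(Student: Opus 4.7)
The plan is to argue by contradiction along the same lines as Theorems~\ref{th:impGamma} and~\ref{th:impBeta}, exploiting the fact that the $0$-Reordering clause is actually even more demanding than the $0$-Creation clause: the formal predicate $R(E)_0 = S(E)^*$ forbids any element of $R(E)$ from lying outside the sent sequence, so a single ghost delivery will suffice to break it.

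I would start by assuming that a $(\alpha,\beta,\gamma,0)$-Stabilizing Data-Link algorithm $\mathcal{A}$ over $c$-bounded channels exists, and then reuse verbatim the ghost-delivery construction from the proof of Theorem~\ref{th:impGamma}. Since $\mathcal{A}$ must contain at least one instruction that releases a message to $p_j$'s application layer, and since both $p_j$'s program counter and the initial content of $s_{ij}$ (which may contain up to $c$ ghost packets) are unconstrained in the starting configuration, I can build an execution $E$ whose very first step delivers some packet $m$ to $p_j$, and extend $E$ so that $p_i$'s application layer never submits $m$. This yields $R(E)^1 = m$ with $m \notin S(E)$.

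I then unfold what $\delta = 0$ forces on this $E$: the property requires some $d \leq 0$, hence $d = 0$, with $R(E) = R(E)_0 = S(E)^*$. By definition of the $^*$ operator applied to a sequence, every element of $S(E)^*$ lies in $S(E)$; the presence of the ghost $m$ at position $0$ of $R(E)$ therefore contradicts $R(E) = S(E)^*$, and we conclude that $\delta \geq 1$ for any $(\alpha,\beta,\gamma,\delta)$-Stabilizing Data-Link algorithm.

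The only step that I view as needing any care is checking that the ghost-delivery construction of Theorem~\ref{th:impGamma} transports to an algorithm that is only assumed to satisfy the $0$-Reordering clause rather than any creation bound. This is essentially free, because the construction relies solely on the existence of a delivery instruction in $\mathcal{A}$ and on the standard self-stabilizing freedom to corrupt program counters and channel contents; neither of these ingredients is tied to a specific resilience parameter, so the same execution $E$ serves here as it did for Theorem~\ref{th:impGamma}.
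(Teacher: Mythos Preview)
Your proof is correct, but it takes a genuinely different route from the paper's. You observe that the formal $0$-Reordering predicate $R(E)_0=S(E)^*$ already entails that every element of $R(E)$ belongs to $S(E)$, so a single ghost delivery with $m\notin S(E)$ kills the property outright. In effect you show that $0$-Reordering implies $0$-Creation and then invoke the construction of Theorem~\ref{th:impGamma}.

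The paper instead arranges for the ghost $m$ to \emph{also} appear in $S(E)$, but only at position $\alpha+1$, preceded by distinct messages $m_0,\ldots,m_\alpha$. It then combines $\alpha$-Loss (so at least one $m_i$ must be delivered) with $0$-Reordering (so that $m_i$ must precede $m$ in $R(E)$) to contradict $R(E)^1=m$. This argument is more involved and relies on the $\alpha$-Loss clause, but it exhibits a genuine \emph{order} violation rather than reducing to a creation violation. Your argument is shorter and uses fewer hypotheses; the paper's argument is closer in spirit to what one might expect ``reordering'' to mean.
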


\begin{proof}
By contradiction, let $\mathcal{A}$ be any $(\alpha,\beta,\gamma,0)$-Stabilizing Data-Link communication algorithm over $c$-bounded channels. Following Theorem \ref{th:impGamma}, we have $\gamma>0$.  This implies that the first message delivered to $p_j$ by $\mathcal{A}$ in an execution $E$ may be a ghost message $m$. Hence, we can assume that $R(E)^1=m$.

It is possible to construct the execution $E$ such that $S(E)^{\alpha+2}=m_0m_1\ldots m_{\alpha-1}m_\alpha m$ and $\forall i\in\{0,\ldots,\alpha\},m_i\neq m$. As $\mathcal{A}$ satisfies the $\alpha$-Loss and the 0-Reordering properties, it follows that $\exists i\in\{0,\ldots,\alpha\},R(E)^1=m_i$ (otherwise, we have a contradiction since either $\mathcal{A}$ lost at least $\alpha+1$ messages or reordered at least one message, that is contradictory). As $m_i\neq m$, we obtain a contradiction that shows that $\delta\geq 1$ for any $(\alpha,\beta,\gamma,\delta)$-Stabilizing Data-Link communication algorithm over $c$-bounded channels.
\end{proof}

In the next section, we present a protocol that is optimal with respect to $\alpha$, $\beta$, $\gamma$, and $\delta$ parameters. That is, our protocol satisfies the $(0,1,1,1)$-Stabilizing Data-Link specification.

\section{A $(0,1,1,1)$-Stabilizing Data-Link Protocol}\label{sec:solution}

\subsection{Presentation of the Protocol}

\paragraph{Key ideas of the protocol.} The rationale of the protocol consists in adding safety extensions to the well-known alternating bit protocol (\emph{a.k.a.} ABP). The concept used in the design of the data-link protocol is to let the sender use a mechanism based on the capacity $c$ of communication channels so that the sender can ensure the execution of an operation in the receiver side. More precisely, the receiver acts only upon receiving a packet from the sender. The sender may repeatedly send a particular packet, and each time the receiver receives a packet it acknowledges the packet arrival.

First, the receiver can deliver a message only if $c+1$ copies of this message have been previously received: this ensures that at least one of them is genuine (\emph{i.e.} was actually sent by the sender). Moreover, a message is delivered only if the expected bit alternates with the one of the previously received message (similarly to the ABP) in order to ensure that no message is duplicated. Indeed, the sender may still send copies of the message with the same alternating bit value until it receives a sufficient number of acknowledgments. 

Second, the sender will expect for each message sent at least $3c+2$ acknowledgments with a matching alternating bit. As up to $c$ acknowledgments could be ghost, this implies that $2c+2$ of these acknowledgments were actually sent by the receiver. One such acknowledgment could be sent by the received due to bad initialization, $c$ of them could be due to $c$ initial ghost messages in the reverse direction, and the remaining $c+1$ can only originate from genuine messages from the sender, that triggered a delivery at the receiver. 

At this stage, the protocol does not ensure the $0$-Loss property due to the use of the alternating bit. Indeed, if the alternating bit values of the sender and of the receiver are not synchronized at the first delivery, the receiver drops the first message. To avoid this message loss, the sender alternates between actual messages and synchronization messages. In other words, to send a message $m$, the sender first sends a synchronization message (denoted by $<SYNCHRO>$) until it receives $3c+2$ acknowledgments of this synchronization message and then send the actual message $m$ until it receives $3c+2$ acknowledgments of $m$. It follows that only the synchronization message may be lost and the actual message is always delivered to the receiver.

\begin{figure*}[t]
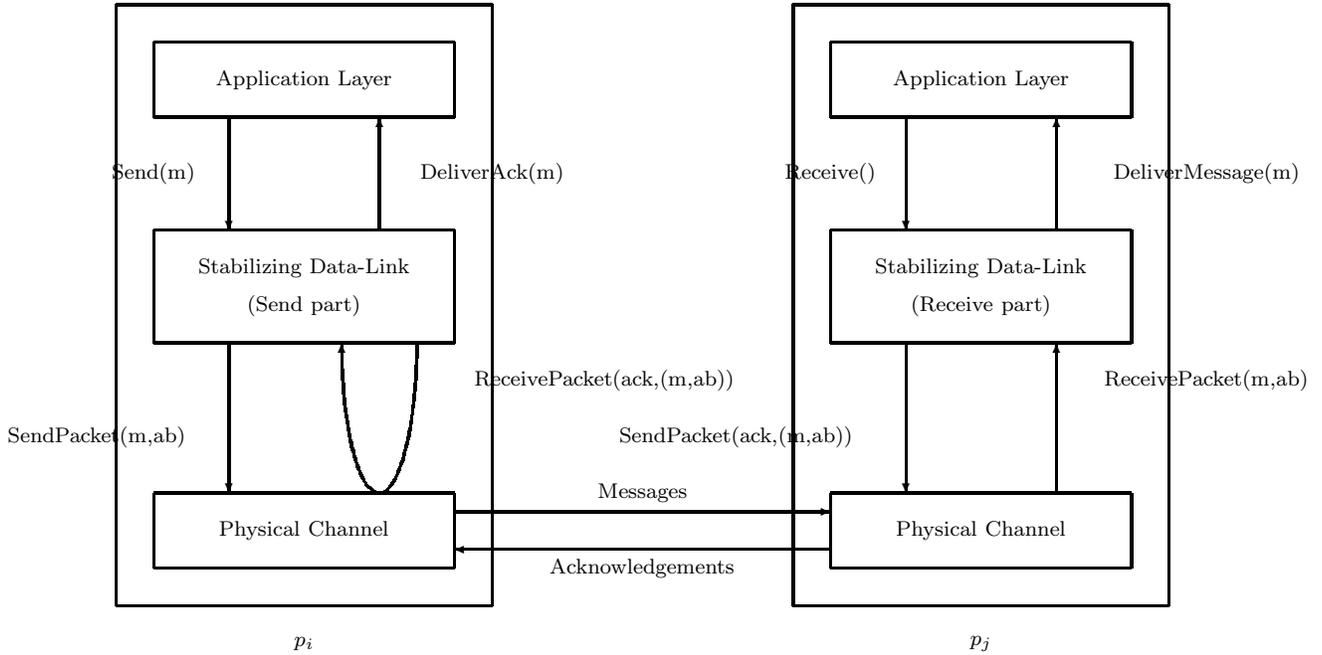

\noindent \begin{centering} \scriptsize{\include{schema}}
  \par\end{centering}
 \caption{General organization of our system.}
\label{fig:schema}
\end{figure*}

\paragraph{General organization of the system.}
Our system is organized as follows. The application layer generates messages to be send from $p_i$ to $p_j$. To perform this goal, it invokes our stabilizing data-link protocol. Furthermore, this layer invokes procedures provided by the physical channel.

In more details, the stabilizing data-link protocol is composed of two functions: \textbf{Send} (which is executed on the sender side) and \textbf{Receive} (which is executed on the receiver side). When the application layer on the sender side wants to send a message $m$, it invokes \textbf{Send(m)}. \textbf{Send} procedure is blocking, that is if \textbf{Send} is already in execution, the application layer waits its termination whereas the \textbf{Receive} function is always executed on the receiver side. When the \textbf{Receive} function has a message to deliver at the application layer on the receiver side, it executes \textbf{DeliverMessage($m$)} that transmits $m$ to the application layer. When the \textbf{Receive} function wants to discard a synchronization message (since this kind of messages is useless to the application layer), it uses the \textbf{DropMessage} function that only deletes the message. Finally, each delivered message is acknowledged to the application layer on the sender side by \textbf{DeliverAck($m$)}.

Functions \textbf{Send} and \textbf{Receive} must interact with the physical channel in order to exchange messages. For this, we assume that the channel provides two operations. First, it provides an operation to send a message or an acknowledgment, respectively \textbf{SendPacket($m$,$ab$)} and \textbf{SendPacket(ack,($m$,$ab$))} where $m$ is the message and $ab$ its alternating bit value. This operation puts $m$ (or its acknowledgment) in the channel if it is possible (if this operation leads to more than $c$ messages in the channel, one of them is arbitrarily deleted). Second, it provides an operation to receive a message or an acknowledgment, respectively \textbf{ReceivePacket($m$,$ab$)} and \textbf{ReceivePacket(ack,($m$,$ab$))} where $m$ is the message and $ab$ its alternating bit value. On the receiver side, \textbf{ReceivePacket($m$,$ab$)} is executed when the channel has a message to deliver and when \textbf{Receive} is not in execution. It sets then $m$ and $ab$ to actual values of the delivered message. In other words, the reception (for the data-link protocol) on the receiver side is message-driven. On the sender side, \textbf{ReceivePacket(ack,($m$,$ab$))} is executed by the data-link protocol and does polling. That is it checks whether the first waiting message in the channel (if any) matches with an acknowledgment of the parameter ($m$,$ab$). It returns \textbf{true} if this is the case, \textbf{false} otherwise. In any case, the first waiting message (if any) is deleted from the channel. The architecture of our system is summarized in Figure \ref{fig:schema}.

\paragraph{Detailed presentation of the protocol.}
Our $(0,1,1,1)$-stabilizing data-link protocol $\mathcal{SDL}$ is presented as Figure \ref{algo:SDL}. In the following, we provide details about the two functions \textbf{Send} and \textbf{Receive}.

The function \textbf{Send} takes a message $m$ as parameter and stores the current alternating bit value in the variable $ab$. First, it alternates the value of $ab$ (line 01) before sending a synchronization message (line 02) using an auxiliary function \textbf{SendMessage}. Then, lines 03 and 04 repeat these instructions with the message $m$. Once the last invocation of \textbf{SendMessage} returns, it delivers to the application layer the acknowledgment of $m$ using \textbf{DeliverAck}. Now, let us describe the auxiliary function \textbf{SendMessage}. This function repeatedly (while loop of line 02) sends its parameter message $m$ (line 03) until receiving $3c+2$ acknowledgment for this message (line 04-05).

The function \textbf{Receive} takes no parameter and uses two variables. The first one is the alternating bit value of the last delivered or dropped message stored in $last\_delivered$ and the second one is a queue $Q$ that stores the number of receptions of at most $c+1$ different messages. Each element of this queue is a  $3$-tuple $(m,ab,count)$, where $m$ is a message, $ab$ is an alternating bit value, and $count$ is an integer denoting the number of packets $(m,ab)$ received for the corresponding $m$ and $ab$ since the last \textbf{DeliverMessage} or \textbf{DropMessage} occurred. The queue $[]$ operator takes a message $m$ and a boolean $b$ as operands, and either enqueues $(m,ab,0)$ (if $(m,ab,*)$ is not present in $Q$, then if the queue contained $c+1$ elements, the last element of the queue is dequeued) or returns a pointer to the $count$ value associated to $m$ and $ab$ in $Q$. Any time a tuple value is changed in the queue, this tuple is promoted at the top of the queue (in order to keep in memory the $c+1$ latest received messages), and the size of the queue does not change. The $\bot$ assignment to a queue $Q$ denotes the fact that $Q$ is emptied. At each reception of a message $(m,ab)$ (line 01), the corresponding entry in the queue is updated (or created if needed) by line 02. If $p_j$ already received $c+1$ copies of $m$ since the last \textbf{DeliverMessage} or \textbf{DropMessage} occurred (test on line 03) then the queue is emptied (line 10). Moreover, if the alternating bit value of the message is different from $last\_delivered$ (test on line 04), then the message is either delivered with \textbf{DeliverMessage} (line 06) or dropped with \textbf{DropMessage} (line 08) depending if it is a synchronization message or not (test on line 05). Then, the $last\_delivered$ value is updated by line 09. Finally, in any case, the message is acknowledged to the sender with line 11.

\begin{figure*}[htb!]
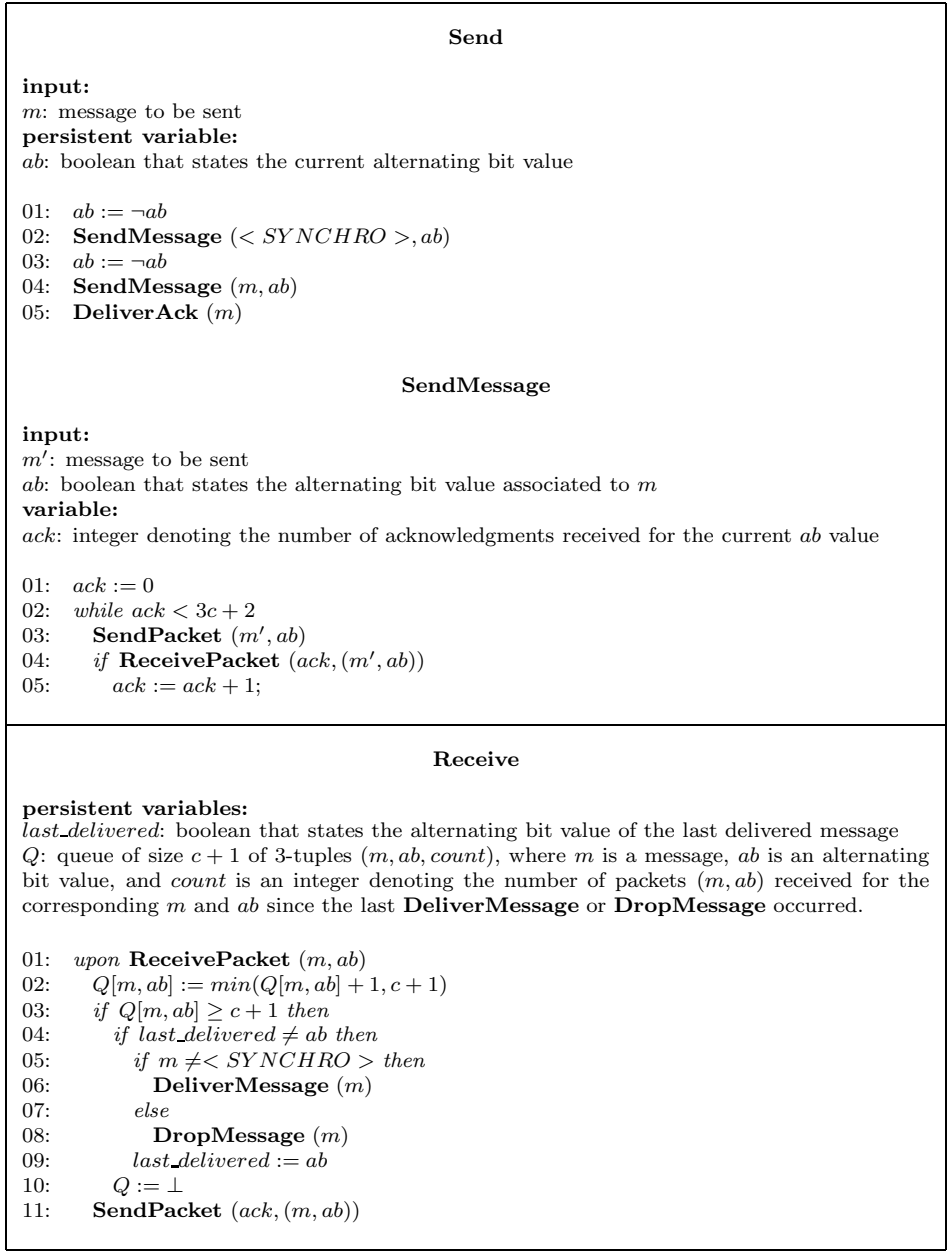

\scriptsize
\centering
\begin{tabular}{|p{4.75in}|}
\hline
\vspace*{0.1cm}
\begin{minipage}[t]{4.75in}
\centering
{\it\bf Send}
\begin{tabbing}
X: \= d \= d \= d \= d \= d \= d \= d \= d \= \kill

\textbf{input:}\\ 
$m$: message to be sent\\

\textbf{persistent variable:}\\
$ab$: boolean that states the current alternating bit value\\
\\

01: \>\>$ab := \neg ab$ \\
02: \>\>\textbf{SendMessage} $(<SYNCHRO>,ab)$ \\
03: \>\>$ab := \neg ab$ \\
04: \>\>\textbf{SendMessage} $(m,ab)$ \\
05: \>\>\textbf{DeliverAck} ($m$) \\
\end{tabbing}

\centering
{\it\bf SendMessage}
\begin{tabbing}
X: \= d \= d \= d \= d \= d \= d \= d \= d \= \kill

\textbf{input:}\\ 
$m'$: message to be sent\\
$ab$: boolean that states the alternating bit value associated to $m$\\

\textbf{variable:}\\
$ack$: integer denoting the number of acknowledgments received for the current $ab$ value\\
\\

01: \>\>$ack := 0$\\
02: \>\>\emph{while} $ack < 3c+2$ \\
03: \>\> \>\textbf{SendPacket} $(m',ab)$ \\
04: \>\> \>\emph{if} \textbf{ReceivePacket} $(ack,(m',ab))$ \\
05: \>\> \> \>$ack := ack+1$; \\

\end{tabbing}
\end{minipage}
\tabularnewline
 \hline
\vspace*{0.1cm}
\begin{minipage}[t]{4.75in}
\centering
{\it\bf Receive}
\begin{tabbing}
X: \= d \= d \= d \= d \= d \= d \= d \= d \= \kill 

\textbf{persistent variables:}\\
\begin{minipage}{4.75in}
$last\_delivered$: boolean that states the alternating bit
value of the last delivered message\\
$Q$: queue of size $c+1$ of $3$-tuples $(m,ab,count)$,
where $m$ is a message, $ab$ is an alternating bit value,
and $count$ is an integer denoting the number of packets
$(m,ab)$ received for the corresponding $m$ and $ab$
since the last \textbf{DeliverMessage} or \textbf{DropMessage} occurred.
\vspace*{0.1cm}
\end{minipage}
\\
\\
01: \>\>\emph{upon} \textbf{ReceivePacket} $(m,ab)$ \\
02: \>\> \>$Q[m,ab] := min(Q[m,ab]+1,c+1)$\\
03: \>\> \>\emph{if} $Q[m,ab] \geq c+1$ \emph{then} \\
04: \>\> \> \>\emph{if} $last\_delivered \neq ab$ \emph{then}\\ 
05: \>\> \> \> \>\emph{if} $m\neq<SYNCHRO>$ \emph{then}\\
06:\> \> \> \> \> \>\textbf{DeliverMessage} ($m$)\\
07: \>\> \> \> \>\emph{else}\\
08:\> \> \> \> \> \>\textbf{DropMessage} ($m$)\\
09:\> \> \> \> \>$last\_delivered := ab$\\
10: \>\> \> \>$Q := \bot$ \\
11: \>\> \>\textbf{SendPacket} $(ack,(m,ab))$\\
\end{tabbing}
\end{minipage}\\[1ex]
\hline
\end{tabular}
\normalsize
\caption{$\mathcal{SDL}$, a $(0,1,1,1)$-Stabilizing Data-Link protocol} 
\label{algo:SDL}
\end{figure*}

\subsection{Correctness Proof}

In this section, let $p_i$ and $p_j$ be two neighboring nodes that execute $\mathcal{SDL}$, $p_i$ being the sender and $p_j$ the receiver. Let $E=(C_1,C_2,\ldots)$ be an execution starting from an arbitrary configuration. 

We say that a message $m'$ is \emph{processed} by $p_j$ when $p_j$ executes \textbf{DeliverMessage} ($m'$) (line 06 of \textbf{Receive} function) if $m'$ is a normal message or when $p_j$ executes \textbf{DropMessage} ($m'$) (line 08 of \textbf{Receive} function) if $m'$ is a $<SYNCHRO>$ message.

First, we need two preliminaries results related to the result of the execution of the procedure \textbf{SendMessage} by $p_i$ depending on the configuration in which $p_i$ starts to execute this procedure.

\begin{lemma}\label{lem:firstabneqld}
When $p_i$ starts to execute \textbf{SendMessage} $(m',ab)$ in a configuration where $ab \neq last\_$ $delivered$, the message $m'$ (either a $<SYNCHRO>$ message or a normal message) and every message parameter to a subsequent invocation of \textbf{SendMessage} is processed by $p_j$ in a finite time.
\end{lemma}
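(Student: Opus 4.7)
The plan is to first establish that \textbf{SendMessage}$(m',ab)$ terminates, then count how many genuine receptions $p_j$ must see, and finally show that one such reception triggers a processing of $m'$ with the right $last\_delivered$ value; once $m'$ is processed the precondition will recur for the next \textbf{SendMessage}, giving the subsequent claim by a bootstrapping argument.

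For termination, I would invoke weak fairness on both directions. While the loop runs, $p_i$ executes \textbf{SendPacket}$(m',ab)$ on line 03 infinitely often, so by weak fairness on $(i,j)$ process $p_j$ receives $(m',ab)$ infinitely often, and each such reception causes $p_j$ to execute line 11 and put a matching ack on $(j,i)$; weak fairness on the reverse channel then forces $p_i$ to receive enough matching acks to reach $3c+2$.

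For the counting step, of the $3c+2$ matching acks received at $p_i$, at most $c$ were ghost acks present initially in $(j,i)$, so $p_j$ sent at least $2c+2$ matching acks during the call. Each of those is issued on line 11 in response to a \textbf{ReceivePacket}$(m',ab)$ at $p_j$, and at most $c$ of these can be from initial ghost packets sitting in $(i,j)$; subtracting the (at most one) ack that could be due to an in-progress \textbf{Receive} at the initial configuration, at least $c+1$ of the $(m',ab)$ receptions correspond to packets genuinely sent by $p_i$ during the call.

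The main step is to show that the accumulation in $Q[m',ab]$ actually hits $c+1$ at an instant when $last\_delivered\neq ab$, so that line 06 or line 08 fires on $m'$. I would define a \emph{disruption} as any event that clears $Q[m',ab]$ between two target receptions, i.e.\ a reset on line 10 caused by a non-target reception reaching $c+1$, or an eviction of $(m',ab)$ from $Q$ when a new entry pushes it out. Since $p_i$ only sends $(m',ab)$ during the call, every non-target reception is an initial ghost and there are at most $c$ of them; after any reset, recovering $Q[m,b]\geq c+1$ from zero would require $c+1$ further receptions of the same $(m,b)$, so only one non-target reset is possible, and an analogous pigeonhole bounds the number of evictions of $(m',ab)$. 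With at least $c+1$ genuine $(m',ab)$ receptions, partitioning them around these bounded disruptions forces a segment long enough to drive $Q[m',ab]$ to $c+1$. In parallel I would track $last\_delivered$: starting at $\neg ab$, it can flip at most once into $ab$ via a ghost bit-$ab$ trigger and at most once back via a ghost bit-$\neg ab$ trigger (using the same ghost-budget arguments), so there is a specific window in which the first genuine target trigger occurs and in which $last\_delivered=\neg ab$, producing the processing of $m'$.

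Finally, once $m'$ is processed line 09 sets $last\_delivered=ab$. The next call is \textbf{SendMessage}$(m'',\neg ab)$ with $\neg ab\neq ab=last\_delivered$, so the lemma's precondition recurs and the same reasoning applies to $m''$; by a straightforward induction every subsequent message is processed in finite time. The hard part is Part 3: coupling the combinatorial accounting of disruptions to the alternating behaviour of $last\_delivered$ so that the bound $3c+2$ is tight enough to guarantee that the triggering reception falls in the window where $last\_delivered=\neg ab$.
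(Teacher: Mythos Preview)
The paper's proof is a short contradiction argument rather than your direct construction. Assuming $m'$ is never processed, it asserts that $last\_delivered$ stays equal to $\neg ab$ throughout the call (line~09 being the only write), so line~04 is always true for bit $ab$ and hence line~03 must always fail for $(m',ab)$. The ack count then gives the contradiction: of the $3c+2$ matching acks at most $c$ are ghosts, so $p_j$ received $(m',ab)$ at least $2c+2$ times; with a queue of size $c+1$ and at most $c$ ghost packets able to create competing entries, the $(m',ab)$ entry is never evicted and its counter is driven to $c+1$. In that framing your separate termination step and your disruption/window bookkeeping simply disappear, because under the contradiction hypothesis there are no resets to track and $last\_delivered$ is frozen.

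Your direct route is more ambitious, but Part~3 does not close as outlined. You allow $last\_delivered$ to flip to $ab$ via a ghost and then ``at most once back,'' yet after that first flip the queue has been emptied and a return flip would require $c+1$ fresh receptions of some $(g',\neg ab)$; since $p_i$ is sending only $(m',ab)$ and at most $c-1$ ghosts remain in $(i,j)$, this is impossible. So if the adversary first delivers a single ghost $(g,ab)$ whose corrupted count $Q[g,ab]$ already sits at $c$, then $last\_delivered$ becomes $ab$ and stays there for the rest of the call, and no window with $last\_delivered=\neg ab$ ever contains a target trigger. The paper sidesteps this scheduling issue by folding the invariance of $last\_delivered$ into the contradiction hypothesis; if you want to keep the direct structure you need an additional argument ruling out such a ghost-induced flip before the first $(m',ab)$ trigger, and the outline does not supply one.
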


\begin{proof}
Consider a configuration $C_k$ where $ab \neq last\_delivered$. Assume that $p_i$ starts to execute \textbf{SendMessage} $(m',ab)$ in $C_k$. By contradiction, assume $m'$ is never processed by $p_j$ in the remainder of $E$. That is, $p_j$ never executes lines 06 or 08 in the \textbf{Receive} procedure. In turn, tests on lines 03 or 04 never evaluate to true simultaneously. 

As $last\_delivered \neq ab$ in $C_k$ and $last\_delivered$ may change only when $m'$ is processed (line 09), we know that the test on line 04 is always true (since $m$ is never processed by assumption).

This implies that $Q[m',ab] \geq c+1$ never evaluates to true (test on line 03). This implies that the sender stops sending $(m',ab)$ before the $(m',ab)$ counter reached $c+1$, which is impossible. The reason is as follows. In order to stop sending the same message, $p_i$ must get $3c+2$ acknowledgments with the expected content $(ack,(m',ab))$. If such $3c+2$ acknowledgments are indeed received, this implies that the receiver issued at least $2c+2$ of those acknowledgments, and thus received $2c+2$ packets $(m',ab)$. Consider the first such packet $(m',ab)$ received by $p_j$. If there is no reset of $p_j$'s queue following this packet, the head of the queue now contains an entry $(m',ab,*)$ that can not be deleted until the receiver resets the entire queue. Indeed, at most $c$ packets are initially present in the receiver's input channel, that can create at most $c$ entries in the queue. Since the queue is of size $c+1$, the $(m',ab,*)$ tuple remains. Now, if the receiver sends $c+1$ packets $(ack,(m',ab))$, it implies that the receiver's queue for entry $(m',ab,*)$ was incremented $c+1$ times, which invalidates the assumption. It follows that $m'$ is processed in a finite time.

Note that after the processing of $m'$, $ab$ and $last\_delivered$ have the same value with the execution of the line 09 of \textbf{Receive} procedure. Hence the next invocation of the \textbf{SendMessage} primitive by $p_i$ will make the values $ab$ and $last\_delivered$ different. Applying the above reasoning, the lemma follows.
\end{proof}

\begin{lemma}\label{lem:firstab=ld}
When $p_i$ starts to execute \textbf{SendMessage} $(m',ab)$ in a configuration where $ab = last\_$ $delivered$, only $m'$ (either a $<SYNCHRO>$ message or a normal message) is not processed by $p_j$.
\end{lemma}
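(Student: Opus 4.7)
The plan is to establish two facts: (i) \textbf{SendMessage}$(m',ab)$ terminates, and (ii) when it does, the receiver's persistent variable $last\_delivered$ is equal to $ab$. Once (ii) is in hand, the next invocation \textbf{SendMessage}$(m'',\neg ab)$ starts in a configuration with $\neg ab \neq ab = last\_delivered$, so Lemma \ref{lem:firstabneqld} applies and guarantees that $m''$ and every message of every subsequent invocation of \textbf{SendMessage} is processed by $p_j$. This will justify the ``only $m'$ is not processed'' phrasing: $m'$ itself may escape processing, but nothing after it does.

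For termination, I would essentially replay the counting argument from the proof of Lemma \ref{lem:firstabneqld}. The acknowledgment on line 11 of \textbf{Receive} is sent on every reception, independently of whether the tests on lines 03--04 fire. Thus, as long as $p_i$ keeps invoking \textbf{SendPacket}$(m',ab)$, the weak fairness of the forward channel forces $p_j$ to receive arbitrarily many copies of $(m',ab)$, each producing an acknowledgment. Hence $p_i$ eventually collects $3c+2$ matching acknowledgments and exits the \emph{while} loop on line 02.

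For claim (ii), the key observation is that during \textbf{SendMessage}$(m',ab)$ the sender only inserts packets with alternating bit $ab$ into the forward channel; any packet with bit $\neg ab$ that $p_j$ can receive during this call must come from the initial (possibly corrupted) contents of the forward channel, of which there are at most $c$. I would use this to bound how often $last\_delivered$ can flip:
\begin{itemize}
\item The first time line 10 empties the queue, we start from $Q = \bot$. Afterwards, processing any message $(m'',\neg ab)$ requires its counter to reach $c+1$, i.e., $c+1$ distinct arrivals of $(m'',\neg ab)$. Since at most $c$ bit-$\neg ab$ packets ever reach $p_j$ during this call, no bit-$\neg ab$ processing is possible after the first queue reset.
\item Consequently, $last\_delivered$ can flip from $ab$ to $\neg ab$ at most once, and only via a processing that exploits the arbitrary initial value of $Q$ before the first reset.
\item If such a flip occurs, then because \textbf{SendMessage} has not yet terminated, $p_i$ keeps sending $(m',ab)$; by the same Lemma \ref{lem:firstabneqld}-style argument, $Q[m',ab]$ subsequently reaches $c+1$. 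At that moment $last\_delivered = \neg ab \neq ab$, so the tests on lines 03 and 04 both succeed and $m'$ is actually processed (line 06 or 08), resetting $last\_delivered$ to $ab$ on line 09.
\item After this second, corrective flip, no further bit-$\neg ab$ processing is available, so $last\_delivered$ stays equal to $ab$ until the end of the call.
\end{itemize}
In the remaining scenarios (no initial flip, or an initial processing of a bit-$ab$ ghost that leaves $last\_delivered = ab$), $last\_delivered$ simply never changes, and any time $Q[m',ab]$ hits $c+1$ the test on line 04 fails, so $m'$ is discarded silently and the queue is reset on line 10. In every case, $last\_delivered = ab$ holds at the moment \textbf{SendMessage}$(m',ab)$ returns, completing claim (ii).

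The delicate point—and where I expect the bulk of the argument to live—is the ``ghost-queue'' bookkeeping: because $Q$ is persistent and arbitrarily initialized, it could in principle enable several spurious deliveries. The one-shot nature of its effect hinges crucially on the fact that every processing is immediately followed by $Q := \bot$ on line 10, together with the capacity bound $c$ on the backward flow of bit-$\neg ab$ packets. Once that is laid out carefully, the rest of the proof is a straightforward case analysis plus a direct invocation of Lemma \ref{lem:firstabneqld} for the next call.
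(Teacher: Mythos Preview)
Your overall skeleton is exactly the paper's: show that \textbf{SendMessage}$(m',ab)$ terminates, observe that $ab$ is negated before the next call, and then invoke Lemma~\ref{lem:firstabneqld}. The paper's proof is much terser than yours; it simply asserts that line~04 evaluates to false (so $m'$ is not processed), that repeated sends and acknowledgments make the procedure return, and that the subsequent flip of $ab$ yields $ab\neq last\_delivered$ for the next invocation.

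Where you go further is in your claim~(ii): you try to establish rigorously that $last\_delivered=ab$ at the moment \textbf{SendMessage} returns, by tracking possible flips caused by ghost packets with bit $\neg ab$ and arguing they can be undone at most once. The paper does not do this analysis at all; it tacitly assumes $last\_delivered$ is unchanged during the call and jumps straight to ``the system reaches a configuration where $ab\neq last\_delivered$''. So your proposal is not a different route but rather a more careful version of the same route, explicitly addressing the ghost-queue subtlety that the paper glosses over. The benefit of your version is that it actually confronts the scenario where an arbitrarily initialized $Q$ could trigger a spurious processing; the cost is the extra case analysis, and you should be aware that the timing argument in your third bullet (that $Q[m',ab]$ reaches $c+1$ \emph{before} the $3c+2$ acknowledgments are collected) needs the same $3c+2$-vs-$c$ counting you used for termination, spelled out relative to the moment of the ghost-induced reset.
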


\begin{proof}
Consider a configuration $C_k$ where $ab=last\_delivered$. Assume that $p_i$ starts to execute \textbf{SendMessage} $(m',ab)$ in $C_k$.

Since the test in the line 04 of the \textbf{Receive} procedure evaluates to false, the processing of $m'$ is not executed. However, since $p_i$ keeps sending $m'$ and $p_j$ acknowledges these packets the \textbf{SendMessage} procedure returns. Note that $p_i$ executes line 01 or 03 of the \textbf{Send} procedure before the next invocation of \textbf{SendMessage} procedure.
 
It follows that the system reaches in a finite time a configuration where $ab \neq last\_delivered$. Then, Lemma \ref{lem:firstabneqld} implies that every message that is parameter of subsequent invocations of \textbf{SendMessage} is eventually processed by $p_j$. 
\end{proof}

Now, we can prove that $\mathcal{SDL}$ satisfies the four properties of the specification (see Section \ref{sub:specification}) starting from any configuration.

\begin{lemma}\label{lem:loss}
$\mathcal{SDL}$ satisfies the 0-Loss property.
\end{lemma}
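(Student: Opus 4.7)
The plan is to show that every application-layer message appearing in $S(E)$ is eventually delivered to $p_j$ by \textbf{DeliverMessage}, which is exactly the $0$-Loss requirement ($a = 0$ in the specification). The central structural fact I will exploit is that the body of \textbf{Send}($m$) performs exactly two calls to \textbf{SendMessage} separated by a bit-flip on line 03: first on $<SYNCHRO>$ with some alternating bit value, and then on the actual message $m$ with the opposite value. So reasoning about whether $m$ is processed reduces to asking which of Lemma \ref{lem:firstabneqld} or Lemma \ref{lem:firstab=ld} applies to the \emph{second} \textbf{SendMessage} call inside each \textbf{Send}.

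I would then perform a short case analysis on the configuration in which the first \textbf{SendMessage}($<SYNCHRO>$, $ab$) of this \textbf{Send}($m$) is entered. If $ab \neq last\_delivered$ at that moment, Lemma \ref{lem:firstabneqld} applies directly and guarantees that both the $<SYNCHRO>$ and every subsequent \textbf{SendMessage} parameter, in particular $m$, is processed by $p_j$; since $m \neq <SYNCHRO>$, the test on line 05 of \textbf{Receive} routes $m$ to \textbf{DeliverMessage}. Otherwise $ab = last\_delivered$ and Lemma \ref{lem:firstab=ld} applies, meaning the $<SYNCHRO>$ message is dropped silently but (i) \textbf{SendMessage} still returns, and (ii) the value of $last\_delivered$ at $p_j$ is not modified during this first inner call. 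Line 03 of \textbf{Send} then flips $ab$, so when the second \textbf{SendMessage}($m$, $ab$) starts we now have $ab \neq last\_delivered$; Lemma \ref{lem:firstabneqld} applies to this second call and ensures $m$ is processed, hence delivered.

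This covers every \textbf{Send}($m$) call occurring within $E$, regardless of whether it is the first one (which may begin in an arbitrary configuration because of stabilization from a corrupted initial state) or a later one (where repeated application of Lemma \ref{lem:firstabneqld} suffices). Thus for every $m \in S(E)$ we have $m \in R(E)$, which is exactly $\exists a \leq 0, \forall m \in S(E)_a, m \in R(E)$.

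I do not anticipate any genuine obstacle: Lemmas \ref{lem:firstabneqld} and \ref{lem:firstab=ld} already encapsulate all the delicate reasoning about acknowledgement counts, ghost packets, and queue resets. The only subtlety worth flagging explicitly in the write-up is the observation that $last\_delivered$ is untouched in the Lemma \ref{lem:firstab=ld} case, so the bit-flip on line 03 of \textbf{Send} reliably puts the protocol into the Lemma \ref{lem:firstabneqld} regime for the second inner call, which is what prevents the actual payload $m$ (as opposed to its leading $<SYNCHRO>$) from ever being the lost message.
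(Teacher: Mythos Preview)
Your proposal is correct and follows essentially the same approach as the paper: both arguments exploit the two-call structure of \textbf{Send} and invoke Lemmas~\ref{lem:firstabneqld} and~\ref{lem:firstab=ld} to conclude that the leading $<SYNCHRO>$ absorbs any possible loss so that the actual payload $m$ is always processed and hence delivered. The only presentational difference is that the paper first notes explicitly that every invocation of \textbf{Send} terminates (so that each $m\in S(E)$ eventually has its \textbf{Send}($m$) started), and then invokes Lemma~\ref{lem:firstab=ld} directly to conclude $ab\neq last\_delivered$ at the second inner call, whereas you reach the same point via an explicit two-case split on the configuration at the first inner call.
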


\begin{proof}
Assume that $p_i$ has to send a message $m$ to $p_j$ starting from an arbitrary configuration. Note that proofs of Lemmas \ref{lem:firstabneqld} and \ref{lem:firstab=ld} imply that any invocation of the \textbf{Send} procedure eventually ends. This implies in turn that $p_i$ starts to execute \textbf{Send}$(m)$ in a finite time. 

Then, $p_i$ invokes first \textbf{SendMessage} with a $<SYNCHRO>$ message as parameter (see line 02 of the \textbf{Send} procedure). Note that this $<SYNCHRO>$ message may be lost if $ab=last\_delivered$ when $p_i$ starts to execute \textbf{SendMessage} by Lemma \ref{lem:firstab=ld}. 

Then, following Lemma \ref{lem:firstab=ld} that we have $ab\neq last\_delivered$ when $p_i$ starts to execute \textbf{SendMessage} with $m$ as parameter (see line 04 of the \textbf{Send} procedure) since it has executed line 03 of the \textbf{Send} procedure. By Lemma \ref{lem:firstabneqld}, it follows that $m$ is eventually processed by $p_j$. As $m$ is a normal message, this implies by definition that $m$ is delivered to $p_j$ in a finite time.

As this result holds whatever the state of the system when $p_i$ requests to send $m$, we obtain that $\forall m\in S(E), m\in R(E)$. It is sufficient to observe that $S(E)=S(E)_0$ to obtain the result.
\end{proof}

\begin{lemma}\label{lem:duplication}
$\mathcal{SDL}$ satisfies the 1-Duplication property.
\end{lemma}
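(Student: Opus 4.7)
The plan is to show that every normal message $m \in S(E)$ with $m \neq m'_0$ is delivered at most once; the 1-Duplication property then holds with $b \leq 1$, since only the first delivered content $m'_0$ can appear more than once in $R(E)$. Throughout, write $a_0$ for the alternating bit value with which the sender transmits normal messages inside its \textbf{Send} calls (so $<SYNCHRO>$ packets are sent with $\neg a_0$).

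The key first observation I will establish is that, for every content $m' \neq <SYNCHRO>$, a packet $(m', \neg a_0)$ can only reach $p_j$ as an initial ghost: channel capacity bounds such packets to at most $c$ across the whole execution. Hence, after the first execution of line 10 of \textbf{Receive}, the counter $Q[m', \neg a_0]$ never reaches $c+1$ between two resets. In particular, any delivery of $m$ triggered after the first queue reset must come from a packet $(m, a_0)$, never from $(m, \neg a_0)$. Let $t_1$ denote the first delivery of $m$, and $t^s_m$, $t^e_m$ the times at which the sender respectively enters and exits its \textbf{SendMessage}$(m, a_0)$ call inside \textbf{Send}$(m)$ (finite by Lemmas~\ref{lem:firstabneqld} and~\ref{lem:firstab=ld}). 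I first show $t_1 \geq t^s_m$: if $t_1$ were the very first processing event in $E$ then $m = m'_0$, contradicting our assumption; otherwise some earlier processing event empties $Q$, after which $Q[m, a_0]$ can be rebuilt only from the at most $c$ remaining ghost copies of $(m, a_0)$, insufficient to reach $c+1$ before $p_i$ starts sending $m$.

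Now suppose for contradiction that $m$ is delivered again at some $t_2 > t_1$. At $t_1$, line 09 sets $last\_delivered := a_0$ and line 10 empties $Q$; by the key observation, the trigger at $t_2$ must have $ab = a_0$, so $last\_delivered$ must have been flipped back to $\neg a_0$ by some processing event in $(t_1, t_2)$. Such a flip can only occur through a drop of $<SYNCHRO>$ (a delivery with $ab = \neg a_0$ is ruled out by the key observation applied to the triggering content), which demands $c+1$ fresh receptions of $(<SYNCHRO>, \neg a_0)$ since the last queue reset. In the window $(t_1, t^e_m]$ the sender is inside \textbf{SendMessage}$(m, a_0)$ and inserts no $<SYNCHRO>$ packet into channel $(i,j)$; the at most $c$ leftover $<SYNCHRO>$ packets present in the channel at time $t_1$ (capped by the total channel capacity) are insufficient, so no $<SYNCHRO>$ drop can fire before $t^e_m$ and thus $t_2 > t^e_m$. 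But after $t^e_m$ no new $(m, a_0)$ packet is ever inserted into $(i,j)$, so the total $(m, a_0)$ receptions past $t^e_m$ are at most $c$; between the $<SYNCHRO>$ drop that eventually flips $last\_delivered$ to $\neg a_0$ and $t_2$, the counter $Q[m, a_0]$ therefore stays at most $c$, contradicting the $c+1$ threshold required at $t_2$.

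The most delicate step is the one showing that $last\_delivered$ cannot flip back to $\neg a_0$ during $(t_1, t^e_m]$: it combines the global capacity bound $c$ on total channel contents with the fact that no fresh $<SYNCHRO>$ is injected in that window, so every potential threshold-reaching content (other than $(m, a_0)$, which has $ab = last\_delivered$ and is harmless) is bounded by ghost accounting. The other ingredients---ruling out $t_1 < t^s_m$ and bounding post-$t^e_m$ receptions of $(m, a_0)$---follow more directly from the capacity bound and the $c+1$ queue threshold.
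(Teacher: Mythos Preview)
Your proof is correct and is considerably more thorough than the paper's. The paper's argument is short: after the first delivery of $m$ the receiver sets $last\_delivered := ab$ and empties $Q$, and since $p_i$ keeps the same $ab$ while it continues to send $m$, every further $(m,ab)$ packet fails the test on line~04. The paper stops there, leaving implicit what happens once $p_i$ finishes \textbf{SendMessage}$(m,ab)$ and moves on---the scenario in which residual $(m,ab)$ packets linger in the channel while a subsequent $<SYNCHRO>$ drop restores $last\_delivered \neq ab$.

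Your argument closes exactly that gap. The structural observation that every normal message is sent with the \emph{same} bit $a_0$ (because each \textbf{Send} flips $ab$ twice, returning it to its starting value) is the key extra ingredient the paper does not make explicit: it lets you rule out deliveries of $m$ triggered by $(m,\neg a_0)$ after any queue reset, and it forces any flip of $last\_delivered$ back to $\neg a_0$ to come from a $<SYNCHRO>$ drop. You then exclude such a drop during $(t_1,t^e_m]$ by capacity counting, and exclude a second delivery after $t^e_m$ because no fresh $(m,a_0)$ packets are ever injected again. One small presentational point: the phrase ``in the window $(t_1,t^e_m]$'' tacitly assumes $t_1 \le t^e_m$; if $t_1 > t^e_m$ the window is empty and the desired conclusion $t_2 > t^e_m$ follows directly from $t_2 > t_1$, so the case split is harmless but worth making explicit. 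Overall your route is a genuine strengthening of the paper's sketch, trading brevity for a complete case analysis.
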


\begin{proof}
By contradiction, assume that there exists an execution $E$ of $\mathcal{SDL}$ such that $\forall b\leq 1, \exists m\in S(E),\big|\{m'_i=m|m'_i\in R(E)\}\big|>1 \wedge m\notin R(E)^b$. In particular, this property is true for $b=1$. Hence, $\exists m\in S(E), \big|\{m'_i=m|m'_i\in R(E)\}\big|>1 \wedge m\notin R(E)^1$. In other words, there exists in $E$ a message $m$ sent by $p_i$ delivered several times to $p_j$. Moreover $m$ is not the first message received by $p_j$.

This implies that the line 06 in the \textbf{Receive} procedure is executed several times for the message $m$. It is impossible and the reason is the following. After the first delivery of $m$ the receiver empties the queue and makes $last\_delivered=ab$ (see proof of Lemma \ref{lem:firstab=ld}). Note that $p_i$ modifies $ab$ only when it stops to send $m$. Even if $p_i$ keeps invoking \textbf{SendPacket} $(m,ab)$ until it receives the $3c+2$ acknowledgments, none of these messages will be delivered since for each of them the test in line 04 in the \textbf{Receive} procedure returns false.

This contradiction implies that only the first message received by $p_j$ may be duplicate. The lemma follows. 
\end{proof}

\begin{lemma}\label{lem:creation}
$\mathcal{SDL}$ satisfies the 1-Creation property.
\end{lemma}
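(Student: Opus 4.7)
The plan is to proceed by contradiction, assuming that some ghost message $m$ is delivered at position $k \geq 2$ in $R(E)$. The accounting observation driving the proof is that since $s_{ij}$ has capacity $c$, the initial configuration contains at most $c$ packets, and since every subsequent insertion is a genuine \textbf{SendPacket} call of $p_i$, the total number of ghost packets that $p_j$ can ever consume throughout the entire execution is bounded by $c$.

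First I would locate a queue reset strictly preceding this $k$-th delivery. The critical code observation is that the assignment $Q := \bot$ on line 10 of \textbf{Receive} sits inside the $Q[m,ab] \geq c+1$ test of line 03 but outside the $last\_delivered \neq ab$ test of line 04, so $Q$ is flushed every time any entry reaches the threshold, independently of whether a \textbf{DeliverMessage} or \textbf{DropMessage} actually fires. In particular the $(k-1)$-th delivery triggers such a reset, so there is a most recent time $t$ at which $Q$ was emptied before the $k$-th delivery, and every count present in $Q$ at the moment of this $k$-th delivery was produced exclusively by \textbf{ReceivePacket} events that occurred strictly after $t$.

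The central step is then the ghost-budget count. For the $k$-th delivery to fire, line 03 must hold with $Q[m,ab] \geq c+1$; since $Q[m,ab]$ was $0$ just after $t$ and is incremented only by line 02 upon a matching reception, at least $c+1$ packets of content $(m,ab)$ were consumed from $s_{ij}$ between $t$ and this delivery. By the global ghost budget, at most $c$ of these can have been ghosts, so at least one originated from a \textbf{SendPacket}$(m,ab)$ invocation of $p_i$. Inspecting \textbf{Send} and \textbf{SendMessage} shows that $p_i$ only emits \textbf{SendPacket}$(m,ab)$ with $m \neq <SYNCHRO>$ while executing \textbf{Send}$(m)$, so $m \in S(E)$, contradicting the assumption that $m$ is a ghost.

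The main subtlety I expect to need care with is explaining why the argument does \emph{not} rule out a ghost at the very first delivery: since $Q$ is a persistent variable, the initial configuration can already satisfy $Q[m,ab] \geq c+1$ for some arbitrary ghostly content $(m,ab)$, so a single ghost delivery may fire from the first \textbf{ReceivePacket} execution without any genuine copy of $(m,ab)$ having been received and without spending any portion of the ghost budget. It is precisely the queue reset executed at line 10 immediately after that first delivery that makes the accounting above applicable to every subsequent delivery, yielding the $1$-Creation bound.
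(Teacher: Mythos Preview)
Your proof is correct and rests on the same two ingredients as the paper's---the $c$-packet bound on ghosts in $s_{ij}$ and the queue reset at line~10---but you organize the counting more cleanly. The paper argues operationally: it looks at the \emph{first} ghost packet received (in the worst case where the initial queue already carries a matching entry with count $c$ or $c+1$), observes that this reception empties $Q$ and consumes one ghost, and then notes that the $\leq c-1$ ghosts remaining in the channel can never push any counter to $c+1$. Your version instead fixes the most recent reset $t$ preceding the contradictory delivery and applies a single global budget: any $c+1$ receipts of $(m,ab)$ after $t$ must include at least one genuine \textbf{SendPacket} by $p_i$, hence $m\in S(E)$.

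What your formulation buys is uniformity: you never need to case-split on whether the first queue-emptying event was triggered by a ghost or by a genuine packet, nor on whether $ab$ matched $last\_delivered$ at that moment, and you never track ``ghosts remaining in the channel'' as a moving quantity. The paper's argument is a bit more concrete about the worst-case initial queue contents, which makes the tightness of the bound (why exactly one ghost delivery is unavoidable) slightly more visible; your closing paragraph recovers this by pointing to the persistent, arbitrarily-initialized $Q$. Both arguments share the same implicit assumption that any non-\texttt{SYNCHRO} packet emitted by $p_i$ corresponds to a message in $S(E)$.
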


\begin{proof}
By contradiction, assume that there exists an execution $E$ of $\mathcal{SDL}$ such that $\forall c\leq 1, \exists m\in R(E), m\notin S(E) \wedge m\notin R(E)^c$. In particular, this property is true for $c=1$. Hence, $\exists m\in S(E), m\notin S(E)\wedge m\notin R(E)^1$. In other words, there exists in $E$ a message $m$ not sent by $p_i$ but delivered to $p_j$. Moreover $m$ is not the first message received by $p_j$.

Initially the channel $(i,j)$ may contain at most $c$ ghosts messages. In the worst case, the receiver's queue also contains an entry for each of the ghost with the counters initialized to $c$ or $c+1$.

Let $(g,ab)$ be the first ghost message received by $p_j$ with alternated bit set to $ab$. Let us study the two possible cases. First, assume that $ab \neq last\_delivered$. Then $p_j$ delivers $g$ (line 06 of \textbf{Receive} procedure) and empties the queue (line 10 of \textbf{Receive} procedure). Second, assume that $ab = last\_delivered$. Then $p_j$ does not deliver $g$ (due to the test of line 04 of \textbf{Receive} procedure) but it empties the queue (line 10 of \textbf{Receive} procedure).

In both cases, there is at most one ghost message delivered to $p_j$ and the queue has been emptied. In turn, it remains now at most $c-1$ ghosts messages in the channel $(i,j)$. If one of them is received by $p_j$ (after an invocation of \textbf{ReceivePacket}), its associated counter cannot reach the value $c+1$ (unless $p_i$ starts to send the same message but in this case, it is no longer a ghost message) since there are at most $c-1$ copies of the same message. Consequently, none of the $c-1$ remaining ghost messages can be delivered, that contradicts the construction of $m$ and proves the result.  
\end{proof}

\begin{lemma}\label{lem:reordering}
$\mathcal{SDL}$ satisfies the 1-Reordering property.
\end{lemma}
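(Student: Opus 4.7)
The approach is to show that $R(E)$ has one of exactly two shapes, $m_0 m_1 m_2 \cdots$ or $g\, m_0 m_1 m_2 \cdots$ (the latter with a single ghost $g$ at the head), so that dropping at most one initial delivery produces $S(E)^*$. The three previous lemmas (\ref{lem:loss}, \ref{lem:duplication}, \ref{lem:creation}) already control \emph{which} messages may appear in $R(E)$ and \emph{how often}; what is left to justify is the \emph{order} in which the real messages appear.

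The cornerstone I would establish first is that the application layer of $p_i$ presents messages to $\mathcal{SDL}$ strictly sequentially (since \textbf{Send} is blocking) and that each call \textbf{Send}$(m_k)$ returns only after $m_k$ has actually been processed by $p_j$. Inspecting the \textbf{Send} procedure: after line~02's \textbf{SendMessage}$(\langle SYNCHRO\rangle,\neg ab)$ terminates, Lemmas \ref{lem:firstab=ld} and \ref{lem:firstabneqld} imply $last\_delivered$ equals the SYNCHRO's alternating bit (either because SYNCHRO was processed, or because equality already held initially). Line~03 then flips $ab$, so when line~04's \textbf{SendMessage}$(m_k,ab)$ executes, the hypothesis $ab\neq last\_delivered$ of Lemma \ref{lem:firstabneqld} is met. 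The key point extracted from that lemma's proof is that \textbf{SendMessage} cannot collect the required $3c+2$ matching acknowledgments until $p_j$ has actually delivered $m_k$: at most $c$ of those acks can be ghosts, so at least $2c+2$ genuine receptions of $(m_k,ab)$ occurred at $p_j$, which---given the queue of size $c+1$ and the fact that initial ghosts can create at most $c$ competing entries---forces the counter $Q[m_k,ab]$ to reach $c+1$ and triggers delivery.

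Once Step~2 is in place, I would argue that the real deliveries in $R(E)$ appear in exactly the order $m_0, m_1, m_2, \ldots$. Indeed, by the time $p_i$ even begins \textbf{Send}$(m_{k+1})$, the message $m_k$ has already been appended to $R(E)$, and the specific instance $m_{k+1}$ cannot be delivered before $p_i$ transmits it (ghosts with the same content are distinct instances by the convention of Section~\ref{sub:specification}). Combining this ordering with Lemma \ref{lem:loss} (every $m_k$ is delivered at least once), the argument inside Lemma \ref{lem:duplication} (once $m_k$ has been delivered, the alternating-bit test on line~04 of \textbf{Receive} suppresses every further copy, so each $m_k$ appears \emph{exactly} once), and Lemma \ref{lem:creation} (at most one ghost, occurring only at the head), I would conclude that $R(E)$ is either $m_0 m_1 m_2\cdots$, in which case $d=0$ (in the sense used throughout the paper) witnesses $R(E)_d = S(E)^*$, or $g\,m_0 m_1 m_2\cdots$, in which case stripping the single ghost via $d=1$ yields $R(E)_d = S(E)^*$.

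The main obstacle, as above, is formalizing the claim that \textbf{SendMessage}$(m_k,ab)$ returns only after $m_k$ is delivered. The counting argument is essentially done inside the proof of Lemma~\ref{lem:firstabneqld}, but it is phrased there to show finite-time processing rather than a strict precedence between return and delivery; I would either promote it to an explicit auxiliary claim or reference that portion of the proof verbatim, so that Step~3's ordering argument becomes a clean consequence of sequential composition.
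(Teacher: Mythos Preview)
Your proposal is correct and follows essentially the same approach as the paper: both argue that each real message $m_k$ is delivered strictly during the corresponding (blocking) invocation of \textbf{Send}$(m_k)$---because \textbf{SendMessage} cannot collect its $3c+2$ acknowledgments before $p_j$ has processed $m_k$---and then combine this with the previous lemmas to conclude that $R(E)$ is either $S(E)$ or $g\cdot S(E)$. The paper organizes this as an explicit two-case split on whether a ghost is delivered and dispatches the precedence claim with a terse reference to the proofs of Lemmas~\ref{lem:firstabneqld} and~\ref{lem:firstab=ld}, whereas you unpack that counting argument in more detail; your suggestion to promote it to an auxiliary claim is a reasonable improvement over the paper's presentation.
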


\begin{proof}
Following Lemma \ref{lem:creation}, $\mathcal{SDL}$ delivers at most one ghost message to $p_j$ in $E$. Let us consider the two following possible cases.

\begin{description}
\item[Case 1:] $\mathcal{SDL}$ delivers no ghost message to $p_j$ in $E$.\\
According to Lemmas \ref{lem:loss} and \ref{lem:duplication}, any message sent from $p_i$ is delivered to $p_j$ exactly once in this case. Now, observe that any message is delivered to $p_j$ between the beginning and the end of the corresponding execution of the procedure \textbf{Send} by $p_i$. Indeed, the message is delivered to $p_j$ when it receives the $(c+1)$-th copy of the message whereas $p_i$ waits to receive the $(3c+2)$-th acknowledgment of the message to stop sending it (see proof of Lemmas \ref{lem:firstabneqld} and \ref{lem:firstab=ld}). Since the \textbf{Send} procedure is blocking for $p_i$, $R(E)_0=R_E=S_E$ for any execution $E$ where $\mathcal{SDL}$ delivers no ghost message to $p_j$. Hence, $\exists d=0\leq 1,R(E)_d=S_E$.
\item[Case 2:] $\mathcal{SDL}$ delivers one ghost message to $p_j$ in $E$.\\
Assume that the ghost message delivered by $\mathcal{SDL}$ is $m$. Lemma \ref{lem:creation} allows us to state that $m$ is the first message delivered to $p_j$. Then, a similar reasoning to the one of case 1 allows us to conclude that $R(E)=m.S(E)$ for any execution $E$ where $\mathcal{SDL}$ delivers one ghost message $m$ to $p_j$ and then, $R(E)_1=S_E$. Hence, $\exists d=1\leq 1,R(E)_d=S_E$.
\end{description}

In both cases, we show that $\mathcal{SDL}$ satisfies the 1-Reordering property.
\end{proof}

Now, we can conclude on the following corollary of Lemmas \ref{lem:loss}, \ref{lem:duplication}, \ref{lem:creation} and \ref{lem:reordering}.

\begin{theorem}
$\mathcal{SDL}$ satisfies the $(0,1,1,1)$-Stabilizing Data-Link Communication specification.
\end{theorem}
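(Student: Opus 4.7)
The plan is essentially an assembly step: the final theorem is stated as a corollary of the four lemmas immediately preceding it, and each lemma corresponds exactly to one of the four clauses of the $(\alpha,\beta,\gamma,\delta)$-Stabilizing Data-Link specification given in Section \ref{sub:specification}. Concretely, I would write one sentence of the form ``Let $E$ be an arbitrary execution of $\mathcal{SDL}$ starting from an arbitrary configuration,'' and then invoke Lemma \ref{lem:loss} to obtain the $0$-Loss property ($\alpha=0$), Lemma \ref{lem:duplication} to obtain the $1$-Duplication property ($\beta=1$), Lemma \ref{lem:creation} to obtain the $1$-Creation property ($\gamma=1$), and Lemma \ref{lem:reordering} to obtain the $1$-Reordering property ($\delta=1$). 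Since these four properties together constitute exactly the definition of $(0,1,1,1)$-Stabilizing Data-Link Communication, the conclusion follows.

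There is no real obstacle here, because all the technical work has been done in the preliminary lemmas (\ref{lem:firstabneqld} and \ref{lem:firstab=ld}) about the behavior of \textbf{SendMessage}, and then used to prove each of the four resilience lemmas separately. The only things I would double-check while writing the proof are notational: that the quantifications on $\alpha, \beta, \gamma, \delta$ in the specification really match the ``$\exists a \le \alpha$'' style used in the lemma statements (in particular, that $0$-Loss is literally the specialization of $\alpha$-Loss to $\alpha=0$, making $a=0$ forced, and similarly for the other three), and that the four lemmas together cover every execution $E$, not just some restricted class. Since all four lemmas are stated for arbitrary executions starting from an arbitrary configuration, no compatibility issue arises, and the theorem is obtained by directly conjoining their conclusions.
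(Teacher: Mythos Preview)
Your proposal is correct and matches the paper's approach exactly: the paper explicitly presents this theorem as a corollary of Lemmas \ref{lem:loss}, \ref{lem:duplication}, \ref{lem:creation}, and \ref{lem:reordering}, with no further argument given. Your plan to simply conjoin the four lemma conclusions, each supplying one clause of the $(0,1,1,1)$ specification, is precisely what the paper intends.
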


\section{Conclusion}\label{sec:conclusion}

In this paper, we focused on stabilizing data-link protocols over channels of bounded capacity $c$. First, we introduced some measures for fault-resilience following the specification presented in \cite{L96b} that is suitable to the self-stabilizing setting. Then, we proved lowers bounds on these parameters. Finally, we proposed a stabilizing data-link protocol that emulates FIFO reliable links over unreliable bounded non-FIFO communication environment with an optimal fault-resilience. To achieve this optimal fault-resilience, our protocol sends $6c+4$ packets (and their corresponding acknowledgements) to deliver one message to the application layer.

Some interesting open questions follow. Is it possible to achieve optimal fault-resilience with a (significantly) lower message complexity for a given channel capacity $c$? Recently, some works on snap-stabilizing point-to-point communication \cite{CDV09ca,CDV09cb,CDLPV10c} across multiples hops have been presented in a coarse grained communication model. Is it possible to extend these results to the more realistic message passing model using our Stabilizing Data-link as a communication black box? If so, is it possible to provide optimal fault resilience as in the one hop case?

\bibliographystyle{plain}
\bibliography{biblio}

\end{document}